\documentclass[12pt,a4paper]{article}

\usepackage{amsmath,amssymb,amsthm}
\usepackage{algorithm}
\usepackage{algorithmic}
\usepackage{booktabs}
\usepackage{tabularx}
\usepackage{multirow}
\usepackage{graphicx}
\usepackage{xcolor}
\usepackage{colortbl}
\usepackage{url}
\usepackage{cite}
\usepackage{placeins}
\usepackage[margin=1in]{geometry}

\newtheorem{definition}{Definition}
\newtheorem{proposition}{Proposition}

\title{\Large\textbf{Exact Mining of Dense Patterns via Direct Evaluation\\of Local Interval Frequency Using a Sliding Window}}
\author{
  Taihei Takahashi$^1$ \and
  Kanata Takayasu$^1$ \and
  Satoshi Suga$^2$ \and
  Satoshi Kurihara$^1$ \\[0.25em]
  \small $^1$Keio University \quad
  $^2$Kansai University
}
\date{}

\begin{document}
\maketitle

\begin{abstract}
Accurately extracting patterns that appear frequently only within specific time intervals, together with their dense intervals, is important in many applications such as understanding seasonal demand and detecting anomalous behavior.
Frequent itemset mining evaluates support over the entire dataset and therefore cannot detect locally dense patterns.
Existing methods for dense pattern mining with interval output estimate dense intervals through occurrence-gap constraints; however, since the gap constraint parameter governs both pattern identification accuracy and interval detection accuracy simultaneously, finding a parameter setting that achieves high accuracy for both objectives is difficult.
In this paper, we propose Apriori-window, an exact algorithm that resolves this structural limitation.
The proposed method directly evaluates local frequency within a sliding window and thus requires no gap constraint parameter, and it efficiently enumerates dense intervals through anti-monotonicity-based pruning of the search space and stride-skip reduction of the number of window scans.
Experiments on three real-world datasets demonstrate that existing methods struggle to simultaneously achieve high accuracy in both pattern identification and dense interval detection, and scalability experiments on synthetic data confirm the practical applicability of the proposed method.
\end{abstract}

\noindent\textbf{Keywords:} dense pattern mining, frequent pattern mining, time series

\section{Introduction}
\label{sec:intro}

In diverse domains including retail, healthcare, and cybersecurity, large volumes of data are accumulated daily.
The content recorded in such data varies over time due to trends, seasonal fluctuations, and the influence of external events.
Extracting temporal characteristics from such data is important for understanding the nature of the data.
Pattern mining encompasses numerous methods for extracting useful information from data.
Frequent itemset mining (FIM)~\cite{Agrawal1994,Zaki1997,Han2000} is a representative approach in pattern mining that extracts combinations of items appearing frequently across the entire dataset as patterns.
However, since this approach evaluates support over the entire dataset, it cannot account for locally varying occurrence frequencies.
As a result, setting the minimum support threshold high causes patterns that are densely concentrated within limited intervals to be excluded, while setting it low causes many non-dense patterns to be extracted alongside the dense ones.
This dilemma is known as the rare item problem~\cite{Liu1999}.

To address this problem, methods that simultaneously extract dense patterns---itemsets that exhibit high occurrence frequency only within specific intervals---together with their dense intervals have been proposed.
LPFIM~\cite{Mahanta2005}, LPPM~\cite{FournierViger2021}, and RPM~\cite{Kiran2015} all estimate dense intervals through occurrence-gap-based parameters, outputting patterns and their intervals simultaneously.
However, in these methods the occurrence-gap parameter governs both pattern identification accuracy and interval detection accuracy, making it extremely difficult to find a parameter setting that maintains both at a high level simultaneously.

In this paper, we propose Apriori-window, an exact algorithm that overcomes this limitation.
The proposed method evaluates local frequency by directly counting occurrences within a sliding window, requiring no gap threshold and thus avoiding the parameter sensitivity problem.
Combining anti-monotonicity-based pruning of pattern candidates and candidate intervals with stride adjustment, the method achieves comprehensive and exact extraction in practical execution time.

\smallskip
\noindent This paper addresses the following research questions:
\begin{description}
  \item[RQ1] Is it possible to develop an exact algorithm that precisely identifies dense patterns and detects their dense intervals?
  \item[RQ2] Can existing methods accurately identify dense patterns and detect their dense intervals?
  \item[RQ3] How do the execution time and memory usage of the exact dense pattern algorithm scale with data characteristics?
\end{description}

\smallskip
\noindent The main contributions of this paper are as follows:
\begin{enumerate}
  \item We propose Apriori-window, an exact algorithm based on direct evaluation of local frequency via a sliding window.
        We prove the safety of the anti-monotonicity and the stride-skip mechanism, establishing the exactness of the extraction results (RQ1).
  \item Through parameter grid search of existing methods on three real-world datasets, we demonstrate that existing methods struggle to simultaneously achieve high accuracy in both dense pattern identification and dense interval detection (RQ2).
  \item Scalability experiments on synthetic data quantify the growth of execution time and memory usage with respect to data characteristics, confirming the practical applicability of the proposed method (RQ3).
\end{enumerate}

\section{Related Work}
\label{sec:related}

\subsection{Frequent Itemset Mining}
\label{sec:rel_fim}

Frequent itemset mining (FIM) is the task of discovering itemsets whose support over the entire dataset meets a user-specified threshold~\cite{Chee2019}.
Apriori~\cite{Agrawal1994} introduced a candidate-generation and pruning framework exploiting anti-monotonicity; ECLAT~\cite{Zaki1997} improved efficiency through a vertical data representation; and FP-Growth~\cite{Han2000} achieved further speedup via a compressed tree structure that eliminates candidate generation.
Surveys of these foundational methods and their extensions appear in~\cite{FournierViger2017survey,FournierViger2022challenges,Luna2019review}.

Extensions of the FIM framework include high-utility itemset mining~\cite{Ahmed2009,FournierViger2019hupm}, which replaces frequency with utility values such as profit.
To improve computational efficiency or pattern quality, various approaches have also been proposed: direct rule extraction using autoencoders~\cite{Berteloot2025,Karabulut2024}, discovery of low-redundancy pattern sets via binarized neural networks~\cite{Fischer2021}, adaptive search using reinforcement learning~\cite{Ghosh2025}, extraction of sequential and high-utility patterns via deep learning~\cite{Jamshed2020,Porwal2026}, and approximate search using evolutionary computation~\cite{Djenouri2018,Telikani2020}.
Since all of these methods evaluate support over the entire dataset, they cannot detect itemsets that occur at high frequency only within specific time intervals.
Patterns that are globally infrequent but locally frequent are missed when the minimum support threshold is set high, and when it is set low, a large number of non-dense patterns are extracted alongside them (the rare item problem~\cite{Liu1999}).
Deep learning-based methods contribute to scalability and rule-count reduction; however, their support evaluation framework is identical to that of FIM, and approximate results based on latent representations do not guarantee exhaustive and exact extraction.

\subsection{Pattern Mining Considering Temporal Information}
\label{sec:rel_temporal}

Several methods have been proposed that account for changes in frequency along the time axis rather than evaluating frequency over the entire dataset.
TP-Mine~\cite{Wan2009} handles transitional patterns, detecting time points at which frequency changes dramatically; however, its output is change points rather than intervals.
SPF~\cite{Chang2002} computes frequency normalized by the exhibition period of each item, providing comparable support across the same item.
PPM~\cite{Lee2003} is a temporally correlated rule mining method that evaluates support for each interval partitioned at a fixed calendar granularity; while useful for capturing temporal changes in patterns, intervals are fixed to calendar granularity.

Burst detection~\cite{Kleinberg2003,Yang2021} identifies intervals in which the occurrence frequency of a single event surges; although it shares with dense patterns the motivation of detecting frequency concentration within specific intervals, it differs from our work in that it does not involve combinatorial search over itemsets.

In the field of periodic pattern mining, Periodic-Frequent Patterns (PFPs)~\cite{Tanbeer2009,Kiran2016,FournierViger2021b,Amphawan2009,Chen2023} have been proposed.
These methods output itemsets satisfying both $\mathit{Sup}(X) \geq \mathit{minSup}$ and $\mathit{Per}(X) \leq \mathit{maxPer}$, where the maximum period $\mathit{Per}(X) = \max_i p_i$ and $p_1, p_2, \ldots$ is the sequence of occurrence gaps of itemset $X$.
Since $\mathit{maxPer}$ acts as an upper bound on all occurrence gaps, smaller values require stricter periodicity; a single large gap causes the entire itemset to be rejected, making these methods susceptible to noise.
Partial Periodic-Frequent Patterns (PPFPs)~\cite{Kiran2017} relax this constraint and define qualifying itemsets as those with $\mathit{PR}(X) = |\{p_i \leq \mathit{maxPer}\}| / (|\mathit{Sup}(X)|+1) \geq \mathit{minPR}$.
The parameter $\mathit{minPR}$ controls the required fraction of occurrence gaps that lie within $\mathit{maxPer}$; lowering it allows detection of patterns with irregular periodicity.
However, both methods output only patterns and do not explicitly output dense intervals.

Episode mining~\cite{Radhakrishna2015} handles sequential patterns that involve temporal relationships among multiple events, capturing periodicity based on event occurrence frequency, but differs from our problem setting of simultaneously outputting itemsets and their dense intervals.
Emerging pattern mining~\cite{Dong1999} discovers patterns whose support changes substantially before and after a shift in data distribution, but does not produce dense interval output.

\subsection{Methods That Output Patterns and Their Dense Intervals}
\label{sec:rel_interval}

For dense pattern extraction, it is important to output not only the patterns but also the intervals in which they are dense.
Existing methods that extract intervals can be broadly categorized into those that directly compute local frequency and those based on occurrence-gap constraints.

As a direct local frequency approach, GLFMiner~\cite{Yin2014} automatically extracts locally frequent patterns and their intervals, but the intervals are fixed to integer multiples of a predefined temporal granularity, precluding data-driven flexible interval detection.
SIM~\cite{Saleh2008,Saleh2011} outputs the optimal single period for each itemset that satisfies a frequency condition.
Although its motivation is similar to dense patterns, it differs from our problem setting of comprehensively enumerating dense intervals in that it outputs a single optimal period per itemset and imposes no minimum interval length constraint.

As occurrence-gap-based approaches, LPFIM~\cite{Mahanta2005} extracts locally and periodically frequent itemsets with their intervals.
For an itemset $X$, a new interval is started when the gap between occurrences reaches $\mathit{minthd1}$; intervals satisfying local support $\mathit{Sup}_{[t_1,t_2]}(X) / \mathit{tc}_{[t_1,t_2]} \geq \sigma$ and interval length $\geq \mathit{minthd2}$ are output (where $t_c$ is the number of transactions in the interval).
The parameter $\mathit{minthd1}$ determines the interval segmentation granularity, and $\mathit{minthd2}$ removes intervals that are too short.

LPPM~\cite{FournierViger2021} achieves noise-robust interval detection by introducing cumulative spillover against the gap constraint.
The spillover at occurrence timestamp $ts_i$ is $\mathit{surPer}(ts_i) = \mathit{interval}(ts_i) - \mathit{maxPer}$, accumulated as $\mathit{soPer}(ts_i) = \max(0,\; \mathit{soPer}(ts_{i-1}) + \mathit{surPer}(ts_i))$; the interval is terminated when this exceeds $\mathit{maxSoPer}$.
Intervals of length $\geq \mathit{minDur}$ are output together with patterns.
The parameter $\mathit{maxPer}$ controls the upper bound on acceptable occurrence gaps, $\mathit{maxSoPer}$ controls the total permissible excess, and $\mathit{minDur}$ controls the minimum interval length.

RPM~\cite{Kiran2015} partitions intervals where the occurrence gap is at most $\mathit{maxPer}$ into periodic intervals, and outputs itemsets for which the number of periodic intervals containing at least $\mathit{minPS}$ consecutive occurrences is at least $\mathit{minRec}$.
The parameter $\mathit{maxPer}$ controls the upper bound on gaps considered periodic, $\mathit{minPS}$ controls the minimum number of consecutive occurrences within one interval, and $\mathit{minRec}$ controls the minimum number of periodic intervals.

These methods are the most closely related prior work to ours in that they output both patterns and their associated intervals.
Table~\ref{tab:pattern_def} compares the pattern definitions of each method with the dense pattern definition of this work.
In contrast to Apriori-window, which directly evaluates the occurrence count within a window, existing methods indirectly estimate density through parameters such as gap constraints, cumulative spillover, and periodicity thresholds.

\begin{table*}[t]
\centering
\caption{Comparison of pattern definitions across methods.}
\label{tab:pattern_def}
\footnotesize
\setlength{\tabcolsep}{4pt}
\renewcommand{\arraystretch}{1.25}
\begin{tabularx}{\textwidth}{@{}p{3.2cm} X p{4.4cm} p{2.0cm}@{}}
\toprule
Pattern Name & Pattern Condition & Dense Interval Detection & Output \\
\midrule
Periodic-Frequent Pattern &
  \textbullet\ $\mathit{Sup}(X) \geq \mathit{minSup}$: lower bound on total occurrence count\newline
  \textbullet\ $\mathit{Per}(X) = \max_i p_i \leq \mathit{maxPer}$: upper bound on the maximum gap between consecutive occurrences &
  Does not output intervals &
  $P$ only \\
Partial Periodic-Frequent Pattern &
  \textbullet\ $\mathit{Sup}(X) \geq \mathit{minSup}$: lower bound on total occurrence count\newline
  \textbullet\ $\mathit{PR}(X) \geq \mathit{minPR}$: lower bound on the fraction of occurrence gaps that are at most $\mathit{maxPer}$ &
  Does not output intervals &
  $P$ only \\
\midrule
Locally and Periodically Frequent Itemset &
  \textbullet\ $\mathit{Sup}_{[t_1,t_2]}(X)/t_c \geq \sigma_\ell$: lower bound on local support rate within $[t_1,t_2]$ ($t_c$: interval length)\newline
  \textbullet\ $t_c \geq \mathit{minthd2}$: minimum interval length &
  Partitions an interval when the occurrence gap reaches $\mathit{minthd1}$ &
  $P$ + interval list \\
Local Periodic Pattern &
  \textbullet\ $\mathit{spillover}_{[t_1,t_2]} \leq \mathit{maxSoPer}$: upper bound on the cumulative surplus of occurrence gaps exceeding $\mathit{maxPer}$\newline
  \textbullet\ $t_c \geq \mathit{minDur}$: minimum interval length &
  Terminates an interval when the cumulative surplus exceeds $\mathit{maxSoPer}$ &
  $P$ + interval list \\
Recurring Pattern &
  \textbullet\ $\mathit{rec}(X) \geq \mathit{minRec}$: lower bound on the number of detected recurring intervals\newline
  \textbullet\ $\mathit{ps}(X) \geq \mathit{minPS}$: lower bound on occurrence density within each recurring interval &
  Groups consecutive occurrences with gaps at most $\mathit{maxPer}$ into a single interval &
  $P$ + interval list \\
\midrule
Dense Pattern (Ours) &
  \textbullet\ $\sup(P,l,W) \geq \sigma$: lower bound on direct occurrence count in window $[l,l{+}W]$ &
  Determines whether the frequency condition is satisfied within a sliding window &
  $P$ + interval list \\
\bottomrule
\end{tabularx}
\renewcommand{\arraystretch}{1.0}
\end{table*}

\subsection{Evaluation of Local Frequency via Sliding Window}
\label{sec:rel_window}

Sliding windows~\cite{Lee2001,Leung2006} have been employed in stream mining~\cite{FournierViger2016spmf} and change point detection in time series~\cite{Aminikhanghahi2017}.
In stream mining, windows serve as a framework for limiting the temporal scope of data analysis as new data arrive online; in change point detection, they are used as a means of detecting local statistical changes in data.

The essence of dense patterns is that a given itemset occurs locally at high frequency within a specific consecutive interval.
In this work, we introduce the evaluation of local frequency within consecutive intervals using a sliding window.
By directly computing the occurrence frequency within the window as it moves, we can assess local density without relying on occurrence-gap constraints, thereby avoiding the parameter sensitivity problem in principle.
This idea is similar in spirit to related work on change point detection.
However, within our survey, we could not identify any prior work whose primary objective is the comprehensive extraction of itemsets and their dense intervals by combining direct local frequency evaluation via a sliding window with the anti-monotonicity of Apriori.
This work is an attempt to formulate and implement this direction as an algorithm.

\section{Problem Definition}
\label{sec:problem}

\subsection{Basic Definitions}

\begin{definition}[Timestamped Transaction Database]
A timestamped transaction database over an item universe $\mathcal{I}$ is a finite set
$D = \{(T_t, t) \mid t \in \mathbb{Z}_{\geq 0}\}$,
where each $T_t \subseteq \mathcal{I}$ is the transaction corresponding to timestamp $t$.
\end{definition}

\noindent\textit{Example.}
Throughout the following discussion, we use the database $D$ shown in Table~\ref{tab:example_db} ($\mathcal{I}=\{a,b,c\}$, $W=10$, $\sigma=3$).

\begin{table}[t]
\centering
\caption{Transaction database $D$ used in running examples.}
\label{tab:example_db}
\small
\begin{tabular}{rl}
\toprule
Timestamp $t$ & Transaction $T_t$ \\
\midrule
1  & $\{a, b\}$ \\
3  & $\{a, b, c\}$ \\
5  & $\{b, c\}$ \\
7  & $\{a, b, c\}$ \\
9  & $\{a, b\}$ \\
20 & $\{a, b, c\}$ \\
22 & $\{b, c\}$ \\
25 & $\{a, b, c\}$ \\
\bottomrule
\end{tabular}
\end{table}

\begin{definition}[Occurrence Timestamp Sequence]
The occurrence timestamp sequence of itemset $P \subseteq \mathcal{I}$ in $D$ is defined as
$\mathit{occ}(P, D) = \{t \mid P \subseteq T_t,\, (T_t, t) \in D\}$.
When $D$ is clear from context, we abbreviate this as $\mathit{occ}(P)$.
\end{definition}

\noindent\textit{Example.}
The occurrence timestamp sequences for each itemset in Table~\ref{tab:example_db} are as follows:
\begin{align*}
  \mathit{occ}(\{a\})     &= \{1, 3, 7, 9, 20, 25\} \\
  \mathit{occ}(\{b\})     &= \{1, 3, 5, 7, 9, 20, 22, 25\} \\
  \mathit{occ}(\{c\})     &= \{3, 5, 7, 20, 22, 25\} \\
  \mathit{occ}(\{a,b\})   &= \{1, 3, 7, 9, 20, 25\} \\
  \mathit{occ}(\{a,c\})   &= \{3, 7, 20, 25\} \\
  \mathit{occ}(\{b,c\})   &= \{3, 5, 7, 20, 22, 25\} \\
  \mathit{occ}(\{a,b,c\}) &= \{3, 7, 20, 25\}
\end{align*}

\begin{definition}[Interval Support]
The interval support of itemset $P$ in window $[l,\, l+W]$ (with window size $W > 0$) is defined as
\[
  \mathit{sup}(P,\, l,\, W) = |\{t \in \mathit{occ}(P) \mid l \leq t \leq l + W\}|.
\]
For minimum support $\sigma \in \mathbb{Z}_{>0}$, a window $[l, l+W]$ satisfying $\mathit{sup}(P, l, W) \geq \sigma$ is called a \emph{dense window} of $P$.
\end{definition}

\noindent\textit{Example.}
For $P=\{a,b\}$, $l=1$, $W=10$:
\[
  \mathit{sup}(\{a,b\},1,10)=|\{1,3,7,9\}|=4\geq3=\sigma,
\]
so window $[1,11]$ is a dense window of $\{a,b\}$.
At $l=15$, however, $\mathit{sup}(\{a,b\},15,10)=|\{20,25\}|=2<3$, so $[15,25]$ is not a dense window.

\begin{definition}[Dense Interval]
An interval $[s, e]$ (with $0 \leq s$, $e \leq T_{\max}$, and $e - s \geq W$) is a \emph{dense interval} of $P$ if
{\small\[
  \forall\, l \in \bigl\{ t \in \mathit{occ}(P) \mid s \leq t \leq e{-}W \bigr\}:\;
  \mathit{sup}(P, l, W) \geq \sigma
\]}
and the interval cannot be further extended. Here $T_{\max}$ is the maximum timestamp in dataset $D$.
\end{definition}

\noindent\textit{Example.}
For $P=\{a,b\}$ with $\mathit{occ}(\{a,b\})=\{1,3,7,9,20,25\}$, the condition $\mathit{sup}(\{a,b\},l,10)\geq3$ holds only for $l\in[-3,3]$.
Indeed, at $l=4$, $\mathit{sup}(\{a,b\},4,10) = |\{7,9\}|=2<3$.
Hence the right endpoint of the dense interval is $e=3+10=13$, and constraining the left endpoint to $\geq 0$, the dense interval of $\{a,b\}$ is $[0,13]$.

\begin{definition}[Dense Pattern]
An itemset $P$ that has at least one dense interval is called a \emph{dense pattern}.
\end{definition}

\begin{definition}[Dense Pattern Mining Problem]
Given a timestamped transaction database $D$, a window size $W$, and a minimum support $\sigma$, enumerate all pairs $(P, \text{DI}(P))$ of dense patterns $P$ and their dense interval sets $\text{DI}(P)$.
\end{definition}

\noindent\textit{Example.}
For the database $D$ in Table~\ref{tab:example_db} with $W=10$, $\sigma=3$, and $T_{\max}=25$, all dense patterns and their dense intervals are:
\begin{align*}
  \{a\}   &:\ [0,13] \\
  \{b\}   &:\ [0,15],\ [15,25] \\
  \{c\}   &:\ [0,13],\ [15,25] \\
  \{a,b\} &:\ [0,13] \\
  \{b,c\} &:\ [0,13],\ [15,25]
\end{align*}

\section{Proposed Method: Apriori-window}
\label{sec:method}

\subsection{Method Overview}

Apriori-window is an exact algorithm that applies the anti-monotonicity of Apriori~\cite{Agrawal1994} to sliding-window local frequency evaluation.
The input is a timestamped transaction database $D$, a window size $W$, and a minimum support $\sigma$; the output is an enumeration of all dense patterns $P$ together with their dense interval sets $\text{DI}(P)$.

The algorithm begins at itemset size $k=1$, repeatedly generating size-$(k+1)$ candidates from the size-$k$ dense pattern set $L_k$ and computing dense intervals for each candidate.
Search space reduction employs two mechanisms: (1)~candidate pattern pruning via anti-monotonicity of itemsets, and (2)~candidate interval restriction via the intersection of dense intervals of individual items.
Algorithm~\ref{alg:main} presents the Apriori-window algorithm.

\begin{algorithm}[ht]
  \caption{Apriori-window}
  \label{alg:main}
  \begin{algorithmic}[1]
  \REQUIRE item\_ts: item-to-occ map; $W$: window size; $\sigma$: min support
  \ENSURE $F$: map from pattern to dense intervals
  \STATE $F, L_1, S \leftarrow \emptyset, \emptyset$, empty map
  \FOR{$x$ in sorted keys(item\_ts)}
    \STATE DI$(x) \leftarrow$ Scan(occ$(x), W, \sigma, \text{unr})$
    \IF{DI$(x) \neq \emptyset$}
      \STATE $F[\{x\}] \leftarrow$ DI$(x)$; $L_1 \leftarrow L_1 \cup \{\{x\}\}$; $S[x] \leftarrow$ DI$(x)$
    \ENDIF
  \ENDFOR
  \STATE $k \leftarrow 2$; $L_p \leftarrow L_1$
  \WHILE{$L_p \neq \emptyset$}
    \STATE $C_k \leftarrow$ AprioriJoin$(L_p)$ on first $k{-}2$ items
    \STATE $C_k \leftarrow \{X \in C_k \mid$ all $(k{-}1)$-subsets in $L_p\}$
    \STATE $L_k \leftarrow \emptyset$
    \FOR{$X \in C_k$}
      \STATE $\text{occ}(X) \leftarrow \bigcap_{x \in X} \text{occ}(x)$
      \IF{$\text{occ}(X) = \emptyset$}
        \STATE continue
      \ENDIF
      \STATE $R_X \leftarrow$ cand$(X)$ using $S[x]$ for $x \in X$
      \IF{$R_X = \emptyset$}
        \STATE continue
      \ENDIF
      \STATE DI$(X) \leftarrow$ Scan$(\text{occ}(X), W, \sigma, R_X)$
      \IF{DI$(X) \neq \emptyset$}
        \STATE $F[X] \leftarrow$ DI$(X)$; $L_k \leftarrow L_k \cup \{X\}$
      \ENDIF
    \ENDFOR
    \STATE $L_p \leftarrow L_k$; $k \leftarrow k+1$
  \ENDWHILE
  \RETURN $F$
  \end{algorithmic}
\end{algorithm}

\subsection{Anti-monotonicity}

\begin{proposition}[Anti-monotonicity of Interval Support]
If $P' \supseteq P$, then $\mathit{sup}(P', l, W) \leq \mathit{sup}(P, l, W)$ for any $l$ and $W$.
\end{proposition}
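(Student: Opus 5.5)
The plan is to reduce the statement to an inclusion of occurrence timestamp sequences and then use monotonicity of cardinality. Everything follows directly from the definitions of $\mathit{occ}$ and $\mathit{sup}$, so no earlier result beyond these definitions is needed.

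\emph{Step 1: inclusion of occurrence sequences.} I would first show that $P' \supseteq P$ implies $\mathit{occ}(P') \subseteq \mathit{occ}(P)$. Take any $t \in \mathit{occ}(P')$. Then there is $(T_t,t) \in D$ with $P' \subseteq T_t$. Since $P \subseteq P' \subseteq T_t$, transitivity of inclusion gives $P \subseteq T_t$, so $t \in \mathit{occ}(P)$.

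\emph{Step 2: inclusion of windowed occurrences.} Next, fix arbitrary $l$ and $W>0$, and intersect both sides of Step 1 with the same window $\{t : l \le t \le l+W\}$. Intersection preserves inclusion, so
\[
\{t \in \mathit{occ}(P') \mid l \le t \le l+W\} \subseteq \{t \in \mathit{occ}(P) \mid l \le t \le l+W\}.
\]

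\emph{Step 3: compare cardinalities.} Both sets are finite because $D$ is finite. Taking cardinalities therefore preserves the inequality, and this gives $\mathit{sup}(P',l,W) \le \mathit{sup}(P,l,W)$. Since $l$ and $W$ were arbitrary, the statement holds for all of them.

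There is no real obstacle here. The only point that needs care is that the window bounds do not depend on the itemset, so the same filter is applied to both occurrence sets in Step 2. With that observed, the argument is routine.
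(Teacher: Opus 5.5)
Your proof is correct and takes the same approach as the paper: derive $\mathit{occ}(P') \subseteq \mathit{occ}(P)$ from $P \subseteq P' \subseteq T_t$, then restrict both sets to the same window and compare cardinalities. The paper compresses your Steps 2 and 3 into a single sentence, but the argument is the same.
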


\begin{proof}
Since $P' \supseteq P$, any occurrence of $P'$ must contain all items of $P$, so $\mathit{occ}(P', D) \subseteq \mathit{occ}(P, D)$.
Consequently, the occurrence count within any window also satisfies $\mathit{sup}(P', l, W) \leq \mathit{sup}(P, l, W)$.
\end{proof}

Proposition~1 establishes that the same candidate generation and pruning strategy as Apriori applies to the dense pattern mining problem.

\begin{proposition}[Containment of Dense Intervals]
For any $x \in P$, every dense interval of $P$ is contained within some dense interval of $x$.
\end{proposition}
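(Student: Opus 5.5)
The plan is to derive the statement from Proposition~1 by describing every dense interval as a maximal connected union of dense windows. The running example suggests this reading. For $\{a,b\}$ the interval $[0,13]$ arises from the set $\{l : \mathit{sup}(\{a,b\},l,10)\ge 3\} = [-3,3]$: its right end is $3+W$ and its left end is clipped to $0$.

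\textbf{Step 1: dense start positions.} I would define $\Lambda(Q)=\{l : \mathit{sup}(Q,l,W)\ge\sigma\}$ and $U(Q)=\bigcup_{l\in\Lambda(Q)}[l,l+W]$. Then I would argue that the dense intervals of $Q$ are exactly the connected components of $U(Q)$, intersected with $[0,T_{\max}]$. Two facts from the definition support this. First, every window $[l,l+W]$ with $l\in\Lambda(Q)$ lies inside some dense interval, because the defining condition holds for it. Second, maximality (``cannot be further extended'') forces each dense interval to be a full component of this union.

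\textbf{Step 2: monotonicity of the covered set.} For $x\in P$ we have $\{x\}\subseteq P$, so Proposition~1 gives $\mathit{sup}(P,l,W)\le \mathit{sup}(\{x\},l,W)$ for every $l$. Hence $\Lambda(P)\subseteq\Lambda(\{x\})$, and therefore $U(P)\subseteq U(\{x\})$. Clipping both sets to $[0,T_{\max}]$ preserves this inclusion.

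\textbf{Step 3: components.} Let $[s,e]$ be a dense interval of $P$. It is connected and contained in $U(P)\subseteq U(\{x\})$. A connected subset of a union of closed intervals lies in a single connected component of that union. So $[s,e]$ lies inside one component of $U(\{x\})$, which by Step~1 is a dense interval of $x$.

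\textbf{Main obstacle.} The hard part is Step~1, reconciling the literal definition with the component description. The definition quantifies only over start positions $l\in\mathit{occ}(P)\cap[s,e-W]$. For $x$ this set is larger, since $\mathit{occ}(x)\supseteq\mathit{occ}(P)$, so a priori an occurrence of $x$ that is not an occurrence of $P$ might fail the density test inside $[s,e]$.

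My first attempt would be a shift argument. Take $l\in\mathit{occ}(x)\cap[s,e-W]$. Compare the window $[l,l+W]$ with dense windows of $P$ starting at the nearest occurrences of $P$ on either side, and show that together they witness that $[l,l+W]$ lies in a connected region covered by dense windows of $x$.

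If that fails, I would fall back on fixing the component interpretation explicitly as the working semantics, which is the one realized by the scan and by the worked examples. The proposition then follows cleanly from Steps~2 and~3.
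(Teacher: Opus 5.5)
There is a genuine gap. Your core mechanism is the right one: Proposition~1 gives $\Lambda(P)\subseteq\Lambda(\{x\})$, and a connectivity argument finishes. It is also more careful than the paper's proof. The paper transfers a single dense window $[l,l+W]$ of $P$ to $x$ and then asserts containment because ``the left endpoint of $[s,e]$ is determined by $l$.''

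The gap is Step~1, together with your justification for the fallback. The component-of-$U$ semantics is \emph{not} the one used in the worked examples or in Scan. For $\{b\}$ with $\mathit{occ}(\{b\})=\{1,3,5,7,9,20,22,25\}$, $W=10$, $\sigma=3$, you get $\Lambda(\{b\})=[-5,5]\cup[15,20]$. So $U(\{b\})=[-5,15]\cup[15,30]$ is connected, and your semantics returns the single interval $[0,25]$. The paper instead lists $[0,15]$ and $[15,25]$.

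The paper's notion is one interval $[a,b+W]$, clipped to $[0,T_{\max}]$, per maximal run $[a,b]$ of $\Lambda$. Scan does the same: it registers $[s,d+W]$ whenever a run $[s,d]$ of dense start positions ends. Under that notion Step~3 breaks at ``which by Step~1 is a dense interval of $x$.'' A component of $U(\{x\})$ can be the union of several dense intervals of $x$; here $[0,25]=[0,15]\cup[15,25]$. So placing $[s,e]$ in one component does not place it in one dense interval.

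The fix is to run the connectivity argument on start positions rather than on covered time points. Suppose $[s,e]$ comes from the maximal run $[a,b]\subseteq\Lambda(P)$. By Step~2, $[a,b]\subseteq\Lambda(\{x\})$. Being a block of consecutive integers, it lies in a single maximal run $[a',b']$ of $\Lambda(\{x\})$. Hence $[a,b+W]\subseteq[a',b'+W]$, and clipping to $[0,T_{\max}]$ preserves the inclusion.

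You should also drop the planned shift argument: under the literal Definition~4 the proposition is false, so it cannot succeed. Take $W=10$ and $\sigma=3$. Let $P=\{x,y\}$ occur at $0,1,2,20,21,22$, let $x$ occur alone at $10$, and add a transaction $\{z\}$ at $30$ so that $T_{\max}=30$.

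Then $[3,30]$ satisfies the literal definition for $P$:
\begin{itemize}
\item only $l=20$ is checked, and $\mathit{sup}(P,20,10)=3$;
\item $e=T_{\max}$ cannot be raised;
\item $s$ cannot be lowered, because $\mathit{sup}(P,2,10)=1$.
\end{itemize}
But any $[s',30]\supseteq[3,30]$ must check $l=10\in\mathit{occ}(\{x\})$, where $\mathit{sup}(\{x\},10,10)=2<3$. So adopting a window-based semantics is forced, not a fallback, and it has to be the run semantics, not the component-of-$U$ one.
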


\begin{proof}
For a dense interval $[s, e]$ of $P$, there exists $l \in [s, e-W]$ such that $\mathit{sup}(P, l, W) \geq \sigma$.
By Proposition~1 (anti-monotonicity), for any $x \in P$, $\mathit{sup}(x, l, W) \geq \mathit{sup}(P, l, W) \geq \sigma$, so $[l, l+W]$ is a dense window of $x$.
Hence $l$ is contained within some dense interval of $x$.
Since the left endpoint of $[s, e]$ is determined by $l$, the interval $[s, e]$ is contained within a dense interval of $x$.
\end{proof}

From Proposition~2, the dense intervals of a pattern $P$ of size $k \geq 2$ cannot exist outside the intersection of the dense intervals of all individual items composing $P$ (i.e., $\bigcap_{x \in P} \text{DI}(x)$).
Using this property, we restrict the search range for dense intervals to the candidate interval set
\[
  \mathit{cand}(P) = \left\{(s, e) \in \bigcap_{x \in P} \text{DI}(x) \;\middle|\; e - s \geq W \right\}.
\]
If $\mathit{cand}(P)$ is empty, then $P$ cannot be a dense pattern, and the search is terminated.

\subsection{Dense Interval Detection via Sliding Window}
\label{sec:scan}

Algorithm~\ref{alg:scan} presents the dense interval detection procedure.
In sliding-window interval detection, the number of scan operations affects computational efficiency.
To reduce the number of scans, we introduce stride-skip.

\begin{algorithm}[!h]
\caption{Scan}
\label{alg:scan}
\begin{algorithmic}[1]
\REQUIRE occ: sorted occurrence positions; $W$: window size; $\sigma$: min support; $R$: scan ranges
\ENSURE DI: dense interval list
\STATE DI, recorded $\leftarrow \emptyset$
\FOR{$(c_s, c_e) \in R$}
  \STATE $j \leftarrow$ LB(occ, $c_s$); $l \leftarrow \max(c_s, \max(0, \text{occ}[j+\sigma-1]-W))$
  \STATE in\_dense, $s$, $d \leftarrow$ false, $\bot$, $\bot$
  \WHILE{$l \leq c_e$}
    \IF{$l \in$ recorded block $[a, b]$}
      \STATE $l \leftarrow b+1$; continue
    \ENDIF
    \STATE cnt $\leftarrow$ occurrence in window $[l, l+W]$
    \IF{cnt $< \sigma$}
      \IF{in\_dense and $d+W-s \geq W$}
        \STATE Register$(s, d+W, \text{recorded})$
      \ENDIF
      \STATE in\_dense, $s$, $d \leftarrow$ false, $\bot$, $\bot$
      \STATE $l \leftarrow \max(l+1, \text{occ}[\text{UB}(occ, l+W)]-W)$
    \ELSE
      \STATE keep $\leftarrow \text{occ}[\text{LB}(occ, l) + \text{cnt} - \sigma]$
      \IF{not in\_dense}
        \STATE $s \leftarrow l$; $d \leftarrow$ keep; in\_dense $\leftarrow$ true
      \ELSE
        \STATE $d \leftarrow \max(d,$ keep$)$
      \ENDIF
      \IF{keep $+1 > c_e$}
        \STATE $d \leftarrow \min(d, c_e)$; break
      \ELSE
        \STATE $l \leftarrow$ keep $+1$
      \ENDIF
    \ENDIF
  \ENDWHILE
  \IF{in\_dense and $d+W-s \geq W$}
    \STATE Register$(s, d+W, \text{recorded})$
  \ENDIF
\ENDFOR
\RETURN DI
\end{algorithmic}
\end{algorithm}

\textbf{Stride-Skip:}
When the occurrence count within the window $[l, l+W]$ satisfies $\text{cnt} \geq \sigma$, let $t_{\text{keep}}$ be the position of the first of the last $\sigma$ occurrences; the window starting position $l$ is then advanced to $t_{\text{keep}}+1$.

\begin{proposition}[Safety of Stride-Skip]
Let $\text{cnt}$ be the occurrence count within the window $[l, l+W]$, and suppose $\text{cnt} \geq \sigma$.
Let $t_{\text{keep}}$ be the first of the last $\sigma$ occurrences in this window.
Then for any $l'$ satisfying $l \leq l' \leq t_{\text{keep}}$, the window $[l', l'+W]$ is also a dense window.
Therefore, advancing $l$ to $t_{\text{keep}} + 1$ does not miss any dense interval.
\end{proposition}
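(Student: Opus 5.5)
The plan is to prove the first claim by a direct counting argument and then derive the "no dense interval is missed" claim from it. Fix $l$ with $\mathrm{cnt} = \mathit{sup}(P,l,W) \geq \sigma$. List the occurrences in the window $[l,l+W]$ in increasing order as $t_1 < t_2 < \cdots < t_{\mathrm{cnt}}$, so that $t_{\text{keep}} = t_{\mathrm{cnt}-\sigma+1}$ and the last $\sigma$ occurrences are $t_{\mathrm{cnt}-\sigma+1},\dots,t_{\mathrm{cnt}}$. These all lie in $[t_{\text{keep}}, l+W]$.

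\textbf{Step 1 (containment).} Take any $l'$ with $l \leq l' \leq t_{\text{keep}}$. I will show that $[t_{\text{keep}}, l+W] \subseteq [l', l'+W]$. The left inequality $l' \leq t_{\text{keep}}$ holds by assumption. The right inequality $l+W \leq l'+W$ follows from $l \leq l'$.

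\textbf{Step 2 (counting).} By Step 1, all $\sigma$ of the occurrences $t_{\mathrm{cnt}-\sigma+1},\dots,t_{\mathrm{cnt}}$ lie in $[l',l'+W]$. Hence $\mathit{sup}(P,l',W) \geq \sigma$, so $[l',l'+W]$ is a dense window. This settles the first claim; the argument is the same for every $P$ and needs only the definition of interval support, not Proposition~1.

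\textbf{Step 3 (the consequence for Scan).} Here I have to connect "every skipped start position gives a dense window" with the way Algorithm~\ref{alg:scan} records intervals. Every start position in $(l, t_{\text{keep}}]$ is dense, so skipping them cannot create a non-dense gap, and the current dense run continues uninterrupted. In the algorithm the running right endpoint is updated as $d \leftarrow \max(d, t_{\text{keep}})$, so the registered interval $[s, d+W]$ already covers the windows $[l',l'+W]$ for all skipped $l'$. Consequently, the union of dense windows along the run is exactly what Scan records once the run ends.

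I expect Step 3 to be the main obstacle, because it requires matching the informal phrase "does not miss any dense interval" against the definition of dense interval. That definition quantifies only over start points $l \in \mathit{occ}(P)$ with $s \le l \le e-W$. I would first argue by contradiction. Suppose some dense interval had a maximal extent not captured by Scan. Then there would be a dense window start $l^\ast$ that Scan neither evaluated nor skipped over via stride-skip. But every position between consecutive evaluated starts is either in some range $(l, t_{\text{keep}}]$, which is dense by Step 2, or was jumped over in the non-dense branch. The non-dense branch needs a separate short remark, since it is not part of this proposition and is only used to show that the argument is complete. Thus the evaluated dense positions together with the skipped ranges determine the same maximal runs, and the registered intervals coincide with the dense intervals.
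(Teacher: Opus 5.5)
Your Steps 1--2 are exactly the paper's proof. The paper also orders the occurrences and uses $l'\le t_{\text{keep}}$ together with $t_{\mathrm{cnt}}\le l+W\le l'+W$ to keep the last $\sigma$ occurrences inside $[l',l'+W]$. It stops there and treats the ``therefore'' as immediate. The first paragraph of your Step~3 is a correct elaboration that the paper leaves implicit: every skipped start is dense and $d\ge t_{\text{keep}}$, so the run is unbroken and $[s,d+W]$ contains every skipped window.

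The final paragraph, however, claims more than you establish, and as stated it is false. First, the non-dense-branch remark is promised but never given. It is short: for $l<l''<\mathit{occ}[\text{UB}(\mathit{occ},l+W)]-W$, every occurrence in $[l'',l''+W]$ lies in $(l,l+W]$, so the count is below $\sigma$. More seriously, you rightly notice that the definition of dense interval quantifies only over $l\in\mathit{occ}(P)$. You then argue over all start positions and conclude that Scan's registered intervals \emph{coincide} with the dense intervals. Under the literal definition they do not. Take $\{a,b\}$ in the running example ($\mathit{occ}=\{1,3,7,9,20,25\}$, $W=10$, $\sigma=3$):
\begin{itemize}
\item $[0,16]$ satisfies the condition, since only $l=1,3$ are checked, so Scan's $[0,13]$ is not maximal;
\item $[10,25]$ satisfies the condition vacuously, yet Scan never registers it.
\end{itemize}
The coincidence holds only under the all-start-positions reading, which is the one the paper's worked examples actually use. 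So either state the claim for that reading or drop it. It is not needed for the proposition, whose content is Steps 1--2 plus the covering observation.
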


\begin{proof}
Let the occurrences within the window $[l, l+W]$ be ordered as $t_1 \leq t_2 \leq \cdots \leq t_{\text{cnt}}$ ($\text{cnt} \geq \sigma$).
Consider the last $\sigma$ occurrences $t_{\text{cnt}-\sigma+1}, t_{\text{cnt}-\sigma+2}, \ldots, t_{\text{cnt}}$.
For any $l'$ satisfying $l \leq l' \leq t_{\text{cnt}-\sigma+1}$, since $l' \leq t_{\text{cnt}-\sigma+1}$ and $t_{\text{cnt}} \leq l + W \leq l' + W$ (by $l' \geq l$), all $\sigma$ occurrences are contained in $[l', l'+W]$.
Hence $[l', l'+W]$ is a dense window.
\end{proof}

Proposition~3 guarantees the safety of stride-skip.
As the occurrence count $\text{cnt}$ within the window increases, $\text{cnt} - \sigma$ increases and the skip distance grows; therefore, greater computational efficiency is expected for patterns whose occurrences are more densely concentrated.

\textbf{Answer to RQ1:} Exploiting the anti-monotonicity of the pattern space and the containment relationship of dense intervals, it is possible to realize an exact algorithm by extending the Apriori framework. Stride-skip further improves efficiency while preserving the exactness of the interval search.

\section{Experiments}
\label{sec:experiments}

\subsection{Experiment A: Analysis of Existing Methods}
\label{sec:exp_comparison}

The purpose of this experiment is not to directly compare the proposed method against existing methods, but rather to investigate to what extent existing occurrence-gap-based methods can approximate the ``window-based local frequency dense patterns and dense intervals'' as defined in this work.
To this end, we use the output of Apriori-window, which provides the exact solution under our definition, as the reference, and evaluate how well existing methods can approximate this definition.
We use three benchmark datasets---Retail, OnlineRetail, and ChicagoCrime---publicly available through SPMF~\cite{FournierViger2016spmf}.
Dataset statistics are shown in Table~\ref{tab:datasets}.

\begin{table}[h]
\centering
\caption{Experimental datasets.}
\label{tab:datasets}
\footnotesize
\begin{tabular}{lrrr}
\toprule
Dataset & \#Trans & \#Items & Avg.\ length \\
\midrule
Retail       & 88,162    & 14,089 & 10.3 \\
OnlineRetail & 540,455   &  2,147 &  4.4 \\
ChicagoCrime & 2,662,309 &     34 &  1.8 \\
\bottomrule
\end{tabular}
\end{table}

We compare against the following methods:
\begin{itemize}
  \item FIM: Frequent Itemset Mining~\cite{Agrawal1994}
  \item PFPM: Periodic-Frequent Pattern Mining~\cite{Tanbeer2009}
  \item PPFPM: Partial Periodic-Frequent Pattern Mining~\cite{Kiran2017}
  \item LPFIM: Locally and Periodically Frequent Itemset Mining~\cite{Mahanta2005}
  \item LPPM: Local Periodic Pattern Mining~\cite{FournierViger2021}
  \item RPM: Recurring Pattern Mining~\cite{Kiran2015}
  \item Span: A reference method that outputs the interval from the first to the last occurrence timestamp of each pattern as a single interval. Its purpose is to verify the locality of the ground-truth dense periods in each dataset.
\end{itemize}

The parameters $(W, \sigma)$ of Apriori-window are set to $\{(250,25),(500,50),(750,75),(1000,100)\}$ for Retail and ChicagoCrime, and to $\{(100,10),(250,25),(500,50)\}$ for OnlineRetail.
Evaluation is restricted to patterns of length $\geq 2$; the number of ground-truth dense patterns for each setting is shown in Table~\ref{tab:gt_patterns}.

\begin{table}[t]
  \centering
  \caption{Number of ground-truth dense patterns (length $\geq 2$) in Experiment~A. Header values denote $W/\sigma$.}
  \label{tab:gt_patterns}
  \setlength{\tabcolsep}{3pt}
  \begin{tabular}{lrrrr}
  \toprule
  Dataset & $250/25$ & $500/50$ & $750/75$ & $1000/100$ \\
  \midrule
  Retail & 171 & 110 & 89 & 81 \\
  Chicago & 63 & 42 & 28 & 23 \\
  \bottomrule
  \end{tabular}

  \vspace{0.3cm}

  \begin{tabular}{lrrr}
  \toprule
  Dataset & $100/10$ & $250/25$ & $500/50$ \\
  \midrule
  OnlineRetail & 355 & 64 & 11 \\
  \bottomrule
  \end{tabular}
\end{table}

The extraction accuracy of each method is evaluated using three metrics: F1 score, mean Jaccard, and mean Temporal Precision.
Methods that do not output intervals (FIM, PFPM, PPFPM) are evaluated using F1 score only.
Let $\mathcal{P}^*$ denote the set of patterns extracted by Apriori-window and $\text{DI}^*(P)$ their dense intervals.
For each pattern $P$, let $G(P) = \bigcup_{[s,e]\in\text{DI}^*(P)}[s,e]$ and $\hat{G}(P) = \bigcup_{[s,e]\in\widehat{\text{DI}}(P)}[s,e]$ denote the temporal coverage of the ground-truth and predicted intervals, respectively (where $|\cdot|$ denotes duration), and let $\hat{\mathcal{P}}$ denote the set of predicted patterns for each method.

The F1 score is the harmonic mean based on the agreement between ground-truth and predicted pattern sets:
\[
  \text{F1} = \frac{2\,|\hat{\mathcal{P}} \cap \mathcal{P}^*|}{|\hat{\mathcal{P}}| + |\mathcal{P}^*|}
\]

Mean Jaccard is the average over all ground-truth patterns of the Jaccard coefficient computed from the temporal overlap between ground-truth and predicted intervals:
\[
  \text{mean Jaccard} = \frac{1}{|\mathcal{P}^*|} \sum_{P \in \mathcal{P}^*}
    \frac{|G(P) \cap \hat{G}(P)|}{|G(P) \cup \hat{G}(P)|}
\]

Mean Temporal Precision is computed only for correctly identified patterns ($P \in \hat{\mathcal{P}} \cap \mathcal{P}^*$) and is the average fraction of the predicted intervals that overlap with the ground-truth intervals:
\[
  \text{mean TP} = \frac{1}{|\hat{\mathcal{P}} \cap \mathcal{P}^*|}
    \sum_{P \in \hat{\mathcal{P}} \cap \mathcal{P}^*}
    \frac{|G(P) \cap \hat{G}(P)|}{|\hat{G}(P)|}
\]

The extraction accuracy of existing methods varies substantially depending on parameter settings.
Therefore, for each method, parameters are selected by grid search to maximize the F1 score; ties are broken by prioritizing the setting with the higher Jaccard coefficient.
The parameter search spaces for each method are shown in Table~\ref{tab:all_params}, and the best-F1 parameters for each method are listed in Appendix~\ref{sec:best_params}.

\begin{table}[t]
\centering
\caption{Parameter search space for each baseline method.
$\sigma$: minimum support in the experimental setting ($=W/10$).
$N$: upper bound for PFPM maximum period (Retail: 900, OnlineRetail: 3{,}500, Chicago: 10{,}000).
$N'$: upper bound for PPFPM maximum period (Retail: 50, OnlineRetail: 500, Chicago: 900).}
\label{tab:all_params}
\footnotesize
\setlength{\tabcolsep}{3pt}
\begin{tabularx}{\columnwidth}{@{}p{1.0cm}p{2.4cm}Xc@{}}
\toprule
Method & Parameter & Value / Search Range & \\
\midrule
FIM & $\mathit{minSup}$ & $\sigma$ & fixed \\
\midrule
\multirow{2}{*}{PFPM}
 & $\mathit{maxPer}$ & $\{5,10,\ldots,N\}$ (step 5) & searched \\
 & $\mathit{minSup}$ & $\sigma$ & fixed \\
\midrule
\multirow{3}{*}{PPFPM}
 & $\mathit{maxPer}$ & $\{5,10,\ldots,N'\}$ (step 5) & searched \\
 & $\mathit{minPR}$  & $\{0.05,0.10,\ldots,0.95\}$ & searched \\
 & $\mathit{minSup}$ & $\sigma$ & fixed \\
\midrule
\multirow{4}{*}{LPFIM}
 & $\sigma_\ell$ & $\{10,15,20,25,30\}$\% & searched \\
 & $\mathit{minthd1}$ & $\{10,15,\ldots,100\}$ (step 5) & searched \\
 & $\mathit{minthd2}$ & $\{10,20,\ldots,1000\}$ (step 10) & searched \\
\midrule
\multirow{3}{*}{LPPM}
 & $\mathit{maxPer}$ & $\{5,10,\ldots,50\}$ (step 5) & searched \\
 & $\mathit{minDur}$ & $\{5,10,\ldots,1000\}$ (step 5) & searched \\
 & $\mathit{maxSoPer}$ & $\{5,10,\ldots,500\}$ (step 5) & searched \\
\midrule
\multirow{3}{*}{RPM}
 & $\mathit{maxPer}$ & $\{5,10,\ldots,250\}$ (step 5) & searched \\
 & $\mathit{minPS}$ & $\sigma$ & fixed \\
 & $\mathit{minRec}$ & 1 & fixed \\
\bottomrule
\end{tabularx}
\end{table}

LPPM is implemented using the Java code publicly available in SPMF~\cite{FournierViger2016spmf}; all other existing methods and Apriori-window are implemented in Rust.

Table~\ref{tab:combined} presents the F1 score, mean Jaccard, and mean Temporal Precision at the best parameters for all methods across three datasets and all window sizes.
The F1 score of FIM remained at most 0.035.
Since FIM extracts patterns based on dataset-wide support, it retrieves a large number of non-dense patterns that occur at similar frequency to dense patterns but are spread throughout the dataset; consequently, recall is 1.0 while precision is extremely low.
PFPM and PPFPM show improved F1 compared to FIM through the introduction of gap constraints (PPFPM Retail: 0.782--0.926, Chicago: 0.740--0.818), suggesting that occurrence-gap constraints can contribute to pattern identification accuracy.

LPFIM, LPPM, and RPM, which output intervals, achieve high F1 (0.866--0.989) in certain settings for Retail and Chicago; however, Jaccard shows considerable variation across datasets and window sizes, and is particularly low for OnlineRetail.
This is presumably because the ground-truth dense intervals include interval expansion based on the sliding window width $W$, which occurrence-gap-based methods cannot reproduce by definition.
Based on mean Temporal Precision, we confirmed that interval mis-detection occurs even in settings where patterns are correctly identified.
Across all methods and all datasets, mean Temporal Precision exceeds that of the Span reference method, suggesting that occurrence-gap-based interval extraction provides a certain degree of filtering effectiveness.
However, in datasets where the Span score is low, cases were observed in which TP decreases even at settings with high F1 (e.g., RPM Chicago $W=1000$: F1 = 0.920, TP = 0.591; LPPM OnlineRetail $W=500$: F1 = 0.957, TP = 0.358).
This suggests that relaxing parameter constraints to avoid missing dense patterns causes interval false positives, revealing a trade-off between pattern identification and accurate interval extraction.

\textbf{Answer to RQ2:} Existing methods show a degree of effectiveness in dense pattern identification; however, simultaneously achieving high accuracy in both pattern identification and dense interval detection may be difficult.

\begin{table*}[t]
\centering
\caption{Pattern identification accuracy and interval detection accuracy of baseline methods at best-F1 parameters.
Best parameters for each method are listed in Appendix~\ref{sec:best_params}.
Since Apriori-window is an exact algorithm for dense patterns, it achieves F1 = Jaccard = TP = 1.000.}
\label{tab:combined}
\scriptsize
\setlength{\tabcolsep}{1.3pt}
\resizebox{\textwidth}{!}{%
\begin{tabular}{llr|rrrrrr|rrrr|rrrr}
\toprule
 & & & \multicolumn{6}{c|}{F1 Score} & \multicolumn{4}{c|}{Jaccard} & \multicolumn{4}{c}{Temporal Precision} \\
\cmidrule(lr){4-9}\cmidrule(lr){10-13}\cmidrule(lr){14-17}
Dataset & $W$ & $\sigma$ & FIM & PFPM & PPFPM & LPFIM & LPPM & RPM & LPFIM & LPPM & RPM & Span & LPFIM & LPPM & RPM & Span \\
\midrule
\multirow{4}{*}{Retail}
 & 250  & 25  & .009 & .119 & .782 & .866 & .917 & .906 & .205 & .637 & .652 & .397 & .896 & .970 & .936 & .428 \\
 & 500  & 50  & .017 & .120 & .861 & .830 & .967 & .941 & .218 & .715 & .710 & .526 & .949 & .988 & .933 & .609 \\
 & 750  & 75  & .025 & .146 & .908 & .867 & .989 & .943 & .206 & .695 & .713 & .564 & .957 & .977 & .942 & .691 \\
 & 1000 & 100 & .035 & .159 & .926 & .868 & .981 & .929 & .201 & .649 & .652 & .540 & .962 & .954 & .945 & .704 \\
\midrule
\multirow{4}{*}{Chicago}
 & 250  & 25  & .007 & .615 & .740 & .897 & .969 & .928 & .196 & .502 & .451 & .035 & .752 & .761 & .749 & .035 \\
 & 500  & 50  & .007 & .627 & .743 & .886 & .952 & .963 & .337 & .514 & .424 & .042 & .817 & .853 & .716 & .042 \\
 & 750  & 75  & .006 & .604 & .776 & .945 & .929 & .963 & .303 & .150 & .330 & .058 & .743 & .754 & .796 & .058 \\
 & 1000 & 100 & .007 & .622 & .818 & .957 & .936 & .920 & .156 & .182 & .346 & .069 & .673 & .730 & .591 & .069 \\
\midrule
\multirow{3}{*}{OnlineRetail}
 & 100  & 10  & .013 & .312 & .770 & .732 & .774 & .938 & .040 & .065 & .216 & .012 & .728 & .762 & .896 & .012 \\
 & 250  & 25  & .003 & .230 & .623 & .814 & .832 & .699 & .027 & .097 & .311 & .008 & .432 & .637 & .741 & .008 \\
 & 500  & 50  & .001 & .500 & .909 & 1.000 & .957 & .909 & .135 & .239 & .163 & .013 & .687 & .358 & .244 & .013 \\
\bottomrule
\end{tabular}%
}
\end{table*}

\begin{figure}[t]
  \centering
  \includegraphics[width=\columnwidth]{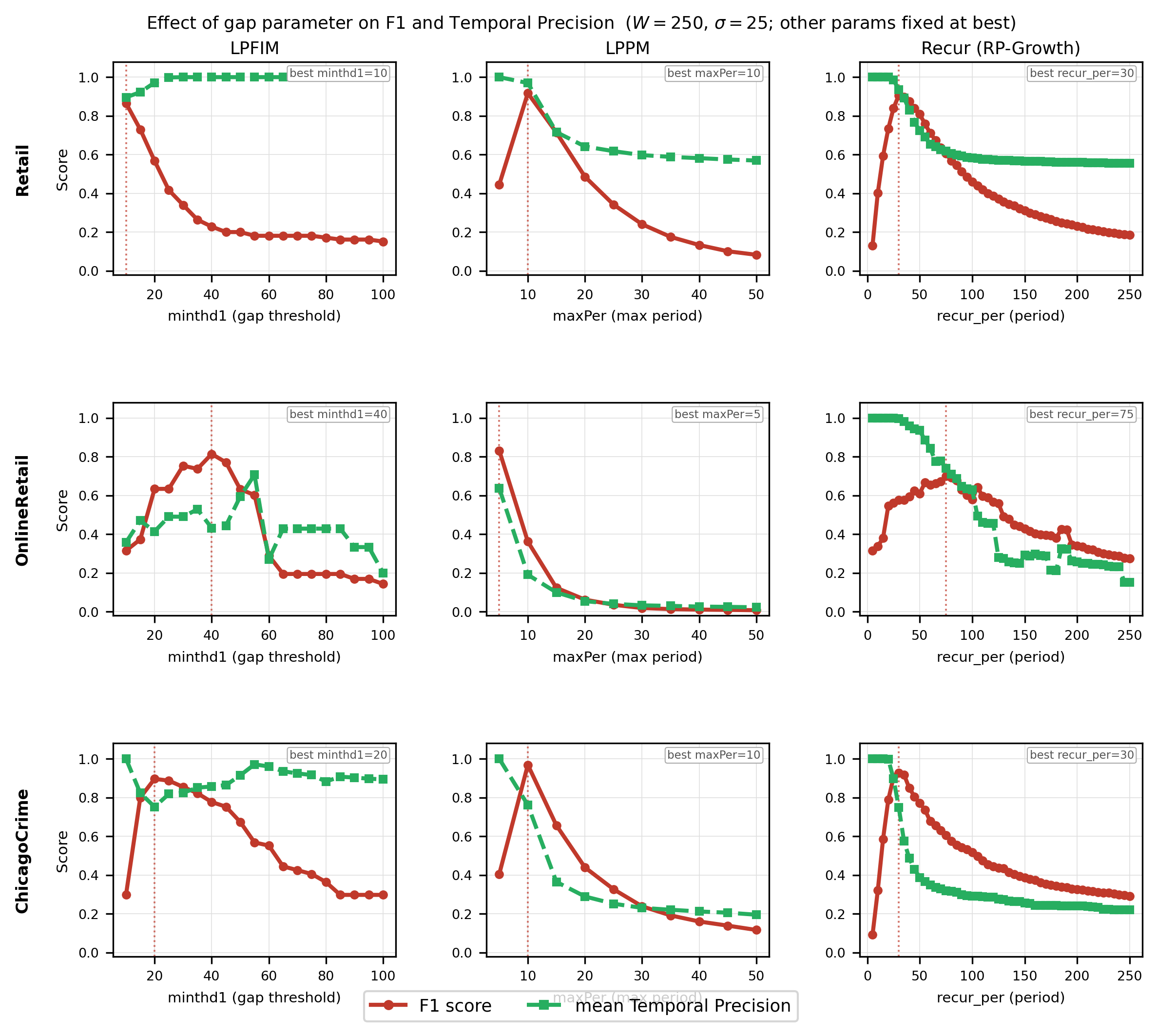}
  \caption{Change in F1 score (red, solid) and mean Temporal Precision (green, dashed) with respect to the gap constraint parameter (LPFIM: $\mathit{minthd1}$, LPPM: $\mathit{maxPer}$, RPM: recur\_per) for each method ($W=250$, $\sigma=25$; other parameters fixed at best-F1 settings). Vertical dotted lines indicate the gap value that maximizes F1.}
  \label{fig:gap_tp}
\end{figure}

\subsection{Experiment B: Scalability}

This experiment examines the growth trend of Apriori-window in execution time and memory usage with respect to data characteristics.
Using synthetic data, we run 45 configurations with varying parameters $T \in \{1.0, 1.25, 1.5, 1.75, 2.0\} \times 10^6$ transactions, $I \in \{10{,}000, 15{,}000, 20{,}000\}$ items, and mean basket length $B \in \{5, 10, 15\}$, each repeated 10 times, and measure the mean execution time and peak memory usage.
The synthetic data generation procedure is described in Appendix~\ref{sec:app_synthetic}.
Experiments were conducted on an 11th Gen Intel Core i7-1185G7 @ 3.00~GHz with 32~GB RAM.

Execution time and peak memory usage are shown in Fig.~\ref{fig:scale} and Fig.~\ref{fig:memory}, respectively.
Both metrics grow approximately linearly with the number of transactions, number of items, and basket length; even under the maximum setting ($T=2\times10^6$, $I=20{,}000$, $B=15$), the mean execution time was 8.9 seconds and memory usage was 333~MiB.
The number of extracted patterns ranged from 1,545 to 1,550 across all settings, confirming that the variation in dense pattern count due to parameter changes is small and that the main drivers of execution time and memory variation are data size and basket length.
Both figures show increased execution time and memory usage as mean basket length increases (from 5 to 15), presumably because a larger basket length increases both the number of occurrence timestamps for each item and the number of candidate patterns.
By contrast, the effect of increasing the number of items $I$ (10k to 20k) is relatively small compared to basket length, presumably because while increasing items expands the length-1 candidate set, anti-monotonicity-based pruning restricts the search for longer patterns to those that include the embedded patterns.

\textbf{Answer to RQ3:} Execution time grows approximately linearly with the number of transactions, items, and basket length; even under the maximum setting ($T=2\times10^6$, $I=20\mathrm{k}$, $B=15$), processing was completed within 8.9 seconds and 333~MiB of memory.

\begin{figure}[ht]
  \centering
  \begin{minipage}[t]{0.48\textwidth}
    \centering
    \includegraphics[width=\linewidth]{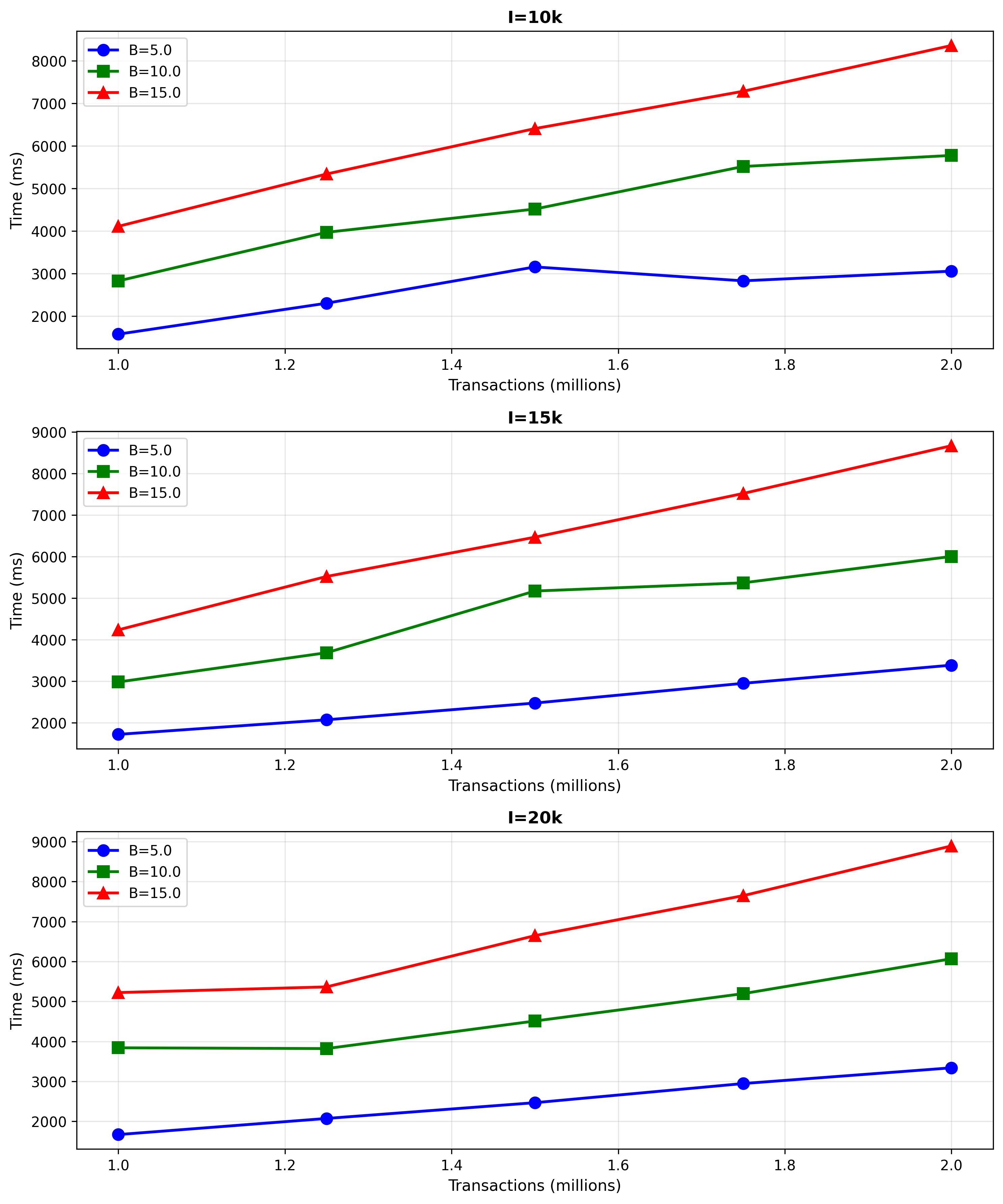}
    \caption{Scalability experiment: mean execution time ($T \in [1{\times}10^6,\,2{\times}10^6]$, $I \in \{10\mathrm{k},15\mathrm{k},20\mathrm{k}\}$, $B \in \{5,10,15\}$, averaged over 10 trials).}
    \label{fig:scale}
  \end{minipage}
  \hfill
  \begin{minipage}[t]{0.48\textwidth}
    \centering
    \includegraphics[width=\linewidth]{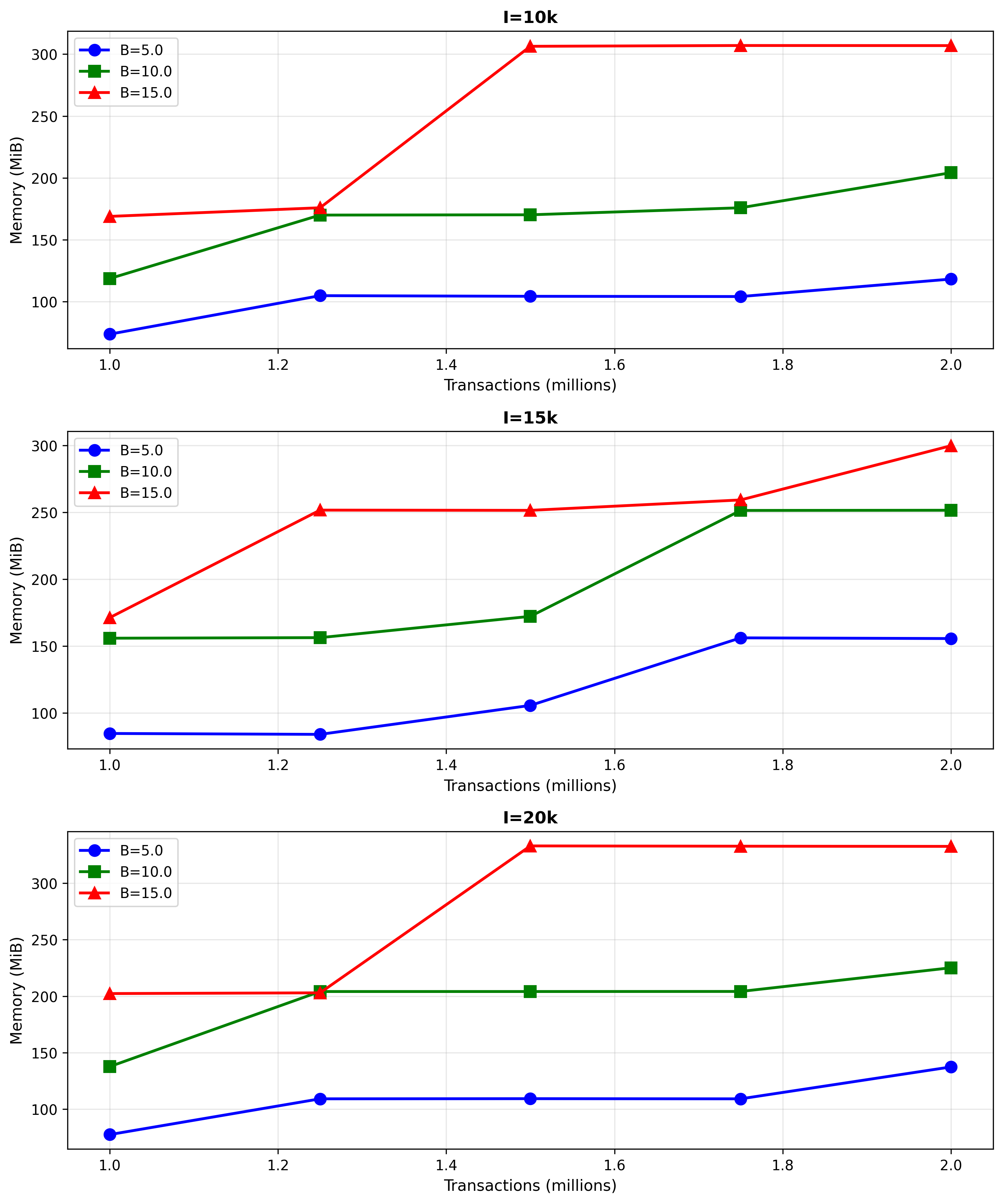}
    \caption{Scalability experiment: peak memory usage (same conditions, averaged over 10 trials). Line styles correspond to $B$ settings; colors correspond to $I$ settings.}
    \label{fig:memory}
  \end{minipage}
\end{figure}


\subsection{Experiment C: Ablation Study}
\label{sec:exp_ablation}

We evaluate the effect of the two search space reduction techniques employed by the proposed method.
The models used in the ablation study are defined as follows:
\begin{itemize}
\item \textbf{Baseline}: A basic implementation with no optimizations applied. Candidate search is performed at all window positions.
\item \textbf{Intersect}: Applies candidate reduction via the intersection of dense intervals of length-1 patterns, but still performs window sliding at all positions.
\item \textbf{Apriori-window}: In addition to the intersection-based reduction, applies the stride-skip optimization. Windows are skipped based on the number of surplus occurrences, reducing unnecessary scan positions.
\end{itemize}

\begin{table}[t]
  \centering
  \caption{Ablation study: comparison of execution times (ms, averaged over 10 trials) for different combinations of optimization components.}
  \label{tab:ablation}
  \footnotesize
  \setlength{\tabcolsep}{3pt}
  \begin{tabular}{llrrrr}
  \toprule
  Dataset & $W$ & $\sigma$ & Baseline & Intersect & Apriori-window \\
  \midrule
  \multirow{4}{*}{Retail}
   & 250  & 25  & 3,313 & 3,111 &   164 \\
   & 500  & 50  & 3,181 & 3,241 &   324 \\
   & 750  & 75  & 3,744 & 3,145 &   184 \\
   & 1000 & 100 & 3,190 & 3,106 &   140 \\
  \midrule
  \multirow{4}{*}{Chicago}
   & 250  & 25  & 6,587 & 3,575 & 2,036 \\
   & 500  & 50  & 5,202 & 3,023 & 1,303 \\
   & 750  & 75  & 4,626 & 2,953 & 1,116 \\
   & 1000 & 100 & 4,269 & 2,806 & 1,077 \\
  \midrule
  \multirow{3}{*}{OnlineRetail}
   & 100  & 10  & 14,620 & 6,020 &   813 \\
   & 250  & 25  & 7,320  & 5,631 &   488 \\
   & 500  & 50  & 6,475  & 5,642 &   357 \\
  \bottomrule
  \end{tabular}
\end{table}

Table~\ref{tab:ablation} presents the ablation study results.
First, the Intersect model reduces execution time compared to Baseline by an approximately equivalent margin for Retail, 40--50\% for Chicago, and 10--60\% for OnlineRetail.
When the intersection computation substantially narrows the candidate interval space, the reduction in the candidate search space outweighs the overhead.
Second, Apriori-window achieves reductions of 94--96\% overall for Retail, 65--77\% for Chicago, and 88--94\% for OnlineRetail, demonstrating the pronounced effect of the stride-skip optimization.
The relative contribution of stride-skip tends to improve as the window size increases, presumably because a larger window leads to greater surplus and consequently a wider skip distance.

\subsection{Experiment D: Application to Real-World Data}
\label{sec:exp_qualitative}

This experiment applies the proposed method to real-world transactional data annotated with external events, and investigates whether dense patterns with different correspondences to external events can be extracted by varying the window size.

We use the Dunnhumby Complete Journey dataset.\footnote{https://www.kaggle.com/datasets/frtgnn/dunnhumby-the-complete-journey}
This dataset contains two years (711 days) of purchase histories from a U.S.\ retailer, consisting of 276,176 baskets and 308 product categories.
The dataset also includes records of 30 promotional campaigns (Type~A, Type~B, and Type~C, each lasting 32--161 days, with a mean duration of 46.6 days) and the products for which coupons were distributed in each campaign.
Data preprocessing is described in Appendix~\ref{sec:app_preprocess}.

We run Apriori-window under the following two conditions:
\begin{itemize}
  \item \textbf{$W=47$ days}: $W = 47{,}000$ ($47\ \text{days} \approx \text{mean promotional duration}$), $\sigma = 470$, $k_{\max} = 5$. Intended for detecting dense intervals corresponding to individual promotional periods.
  \item \textbf{$W=141$ days}: $W = 141{,}000$ ($141\ \text{days} \approx 3 \times \text{mean duration}$), $\sigma = 1{,}410$, $k_{\max} = 5$. Intended for extracting long-term patterns spanning multiple promotional periods.
\end{itemize}

As an indicator of the relationship between each dense pattern's intervals and the promotional periods, we introduce the overlap ratio.
For each pattern, we compute the overlap ratio $R_{\text{promo}}$ between its dense intervals and the promotional periods of the constituent products on a day-count basis:
\[
  R_{\text{promo}} = \frac{\sum_i \text{overlap}([\,s_i, e_i\,],\ P_{\text{promo}})}{\sum_i (e_i - s_i)}
\]
where $[s_i, e_i]$ is the $i$-th dense interval and $P_{\text{promo}}$ is the union of promotional periods for the constituent product group.

Table~\ref{tab:window_effect} shows, for each dense interval $[s_i, e_i]$, the number of promotional periods it overlaps.
With $W=47$ days, 83.0\% of the intervals fell within a single promotional period ($\leq 1$ overlap), whereas with $W=141$ days, 64.6\% of the intervals spanned multiple promotional periods ($\geq 2$ overlaps); the mean number of overlapping periods was 1.06 and 1.93 for $W=47$ and $W=141$ days, respectively.

Furthermore, the 149 patterns detected exclusively with $W=47$ days had a mean $R_{\text{promo}}$ of 76.5\%, exceeding the mean $R_{\text{promo}}$ of 58.4\% for patterns commonly extracted under both window sizes.
This indicates that patterns detected only with $W=47$ days are those concentrated exclusively within specific campaign periods.

\begin{table*}[t]
\centering
\caption{Effect of window size: comparison between $W=47$ days and $W=141$ days.}
\label{tab:window_effect}
\small
\begin{tabular}{lrr}
\toprule
 & $W=47$ days & $W=141$ days \\
\midrule
Number of extracted patterns & 513 & 364 \\
\midrule
Mean number of intervals per pattern & 23.3 & 9.2 \\
Mean interval length & 222 days & 385 days \\
\midrule
Fraction of intervals overlapping $\leq 1$ promotional period & 83.0\% & 35.4\% \\
Fraction of intervals overlapping $\geq 2$ promotional periods & 17.0\% & 64.6\% \\
Mean number of overlapping promotional periods & 1.06 & 1.93 \\
\bottomrule
\end{tabular}
\end{table*}

Table~\ref{tab:case_promo} shows three examples of high-$R_{\text{promo}}$ dense patterns.
A common trend across all three examples is that with $W=47$ days, multiple dense intervals are extracted individually, each with $R_\text{promo} \geq 94\%$, showing high temporal overlap with campaign periods.
With $W=141$ days, on the other hand, the dense intervals are merged into a single long-term interval around DAY $[550, 711]$, collectively capturing multiple consecutive campaigns.

\begin{table*}[!t]
\centering
\caption{Examples of high-$R_{\text{promo}}$ dense patterns.}
\label{tab:case_promo}
\footnotesize
\setlength{\tabcolsep}{4pt}
\begin{tabularx}{\linewidth}{p{2.5cm}lp{1.8cm}Xr}
\toprule
Pattern & Setting & Dense Interval & Campaigns overlapping with the dense interval & $R_\text{promo}$ \\
\midrule
\multirow{4}{=}{BAKED BREAD \& PNT BTR/ JELLY/JAMS}
 & \multirow{3}{*}{$W=47$ days} & $[245, 332]$ & C26 $[224, 264]$, C28 $[259, 320]$, C29 $[281, 334]$,\newline C30 $[323, 369]$ & \multirow{3}{*}{100\%} \\
 &  & $[491, 574]$ & C11 $[477, 523]$, C12 $[477, 509]$, C13 $[504, 551]$,\newline C14 $[531, 596]$ &  \\
 &  & $[612, 711]$ & C18 $[587, 642]$, C21 $[624, 656]$, C24 $[659, 719]$ &  \\
 & $W=141$ days & $[549, 711]$ & C13 $[504, 551]$, C14 $[531, 596]$, C18 $[587, 642]$,\newline C21 $[624, 656]$, C24 $[659, 719]$ & 98\% \\
\cmidrule{1-5}
\multirow{3}{=}{FROZEN PIZZA \& SOUP}
 & \multirow{2}{*}{$W=47$ days} & $[211, 329]$ & C26 $[224, 264]$, C27 $[237, 300]$, C28 $[259, 320]$,\newline C29 $[281, 334]$, C30 $[323, 369]$ & \multirow{2}{*}{94\%} \\
 &  & $[561, 711]$ & C16 $[561, 593]$, C17 $[575, 607]$, C18 $[587, 642]$,\newline C21 $[624, 656]$, C22 $[624, 656]$, C23 $[646, 684]$,\newline C25 $[659, 691]$ &  \\
 & $W=141$ days & $[542, 711]$ & C13 $[504, 551]$, C16 $[561, 593]$, C17 $[575, 607]$,\newline C18 $[587, 642]$, C21 $[624, 656]$, C22 $[624, 656]$,\newline C23 $[646, 684]$, C25 $[659, 691]$ & 82\% \\
\cmidrule{1-5}
\multirow{3}{=}{CANNED JUICES \& CHEESE \& FLUID MILK}
 & \multirow{2}{*}{$W=47$ days} & $[240, 330]$ & C26 $[224, 264]$, C28 $[259, 320]$, C29 $[281, 334]$,\newline C30 $[323, 369]$ & \multirow{2}{*}{98\%} \\
 &  & $[563, 711]$ & C17 $[575, 607]$, C18 $[587, 642]$, C21 $[624, 656]$,\newline C22 $[624, 656]$, C23 $[646, 684]$, C25 $[659, 691]$ &  \\
 & $W=141$ days & $[550, 711]$ & C13 $[504, 551]$, C17 $[575, 607]$, C18 $[587, 642]$,\newline C21 $[624, 656]$, C22 $[624, 656]$, C23 $[646, 684]$,\newline C25 $[659, 691]$ & 73\% \\
\bottomrule
\end{tabularx}
\end{table*}

Table~\ref{tab:case_stable} shows three examples of low-$R_{\text{promo}}$ dense patterns.
A common characteristic across all three examples is that $R_\text{promo}$ is low, ranging from 21\% to 39\%, and remains at a low level regardless of the window size.
The total coverage by dense intervals exceeds 550 days, indicating that dense intervals are formed throughout the entire data collection period.
All three are combinations of food items centered around beef, and such product groups are expected to co-occur throughout the year regardless of specific promotional periods.
In addition, although the number of intervals decreases substantially with $W=141$ days, the total coverage is maintained and the variation in $R_\text{promo}$ is small.
This suggests that while the granularity of the number and length of intervals varies with window size, patterns with a broad temporal distribution can be consistently extracted.

\begin{table*}[!t]
\centering
\caption{Examples of low-$R_{\text{promo}}$ dense patterns.}
\label{tab:case_stable}
\footnotesize
\setlength{\tabcolsep}{5pt}
\begin{tabular}{p{3.0cm}lrrr}
\toprule
Pattern & Setting & No.\ of intervals & Total coverage by dense intervals (days) & $R_\text{promo}$ \\
\midrule
\multirow{2}{=}{BEEF \& CHICKEN}
 & $W=47$ days & 54 & 575 & 24\% \\
 & $W=141$ days & 31 & 551 & 11\% \\
\cmidrule{1-5}
\multirow{2}{=}{BEEF \& POTATOES}
 & $W=47$ days & 49 & 557 & 21\% \\
 & $W=141$ days & 9 & 595 & 33\% \\
\cmidrule{1-5}
\multirow{2}{=}{BEEF \& LUNCHMEAT}
 & $W=47$ days & 8  & 644 & 39\% \\
 & $W=141$ days & 1  & 670 & 29\% \\
\bottomrule
\end{tabular}
\end{table*}

In summary, we confirmed that setting the window size to match the promotional period duration enables individual detection of dense intervals corresponding to each promotional period, while a larger window captures long-term dense intervals spanning multiple promotional periods.
Patterns distributed throughout the entire data collection period (711 days) tended to be detected under both window sizes.
The parameter $W$ determines the temporal resolution of the dense intervals; in practice, appropriate selection of $W$ according to the analysis objective is required.

\section{Conclusion}
\label{sec:conclusion}

In this paper, we proposed Apriori-window, an exact algorithm for mining patterns together with their dense intervals from transactional data.
The proposed method is based on direct evaluation of local frequency via a sliding window, and avoids the inherent trade-off between pattern identification accuracy and interval detection accuracy that characterizes occurrence-gap-based methods.
Through candidate interval pruning using the anti-monotonicity of interval support and stride adjustment based on surplus occurrences, we achieved comprehensive and exact dense pattern mining at practical efficiency.
Experiments on three real-world datasets demonstrated that existing methods struggle to simultaneously achieve high accuracy in both pattern identification and dense interval detection, and scalability experiments on synthetic data confirmed that the proposed method operates within 8.9 seconds and 333~MiB of memory even at a scale of one million transactions.
Future directions include application to a wider range of real-world data, analysis of relationships among dense intervals, and online application to streaming data.

\appendix

\section{Synthetic Data Generation for the Scalability Experiment}
\label{sec:app_synthetic}

Synthetic data were generated to evaluate the scalability of the proposed method.
For background transaction generation, the number of transactions $T$, item universe size $I$, and mean transaction length $B$ were taken as inputs.
Each transaction length was drawn from a normal distribution with mean $B$ and standard deviation $\max(1, B/3)$, rounded to an integer, and clipped to the range $[0, I]$, after which the corresponding number of items were sampled uniformly at random without replacement.
To suppress the extraction of patterns of length $\geq 2$ other than the embedded patterns, the expected item occurrence frequency in the background data was adjusted so that co-occurrences other than the embedded patterns were unlikely.

Timestamps were assigned one per transaction; outside the embedding interval, inter-transaction gaps followed a discrete uniform distribution over $[5, 10]$, and inside the embedding interval, gaps were adjusted to satisfy the prescribed interval length.

Dense patterns were embedded with the following parameters:
\begin{itemize}
  \item Pattern length: 5
  \item Number of patterns: 50
  \item Dense interval length: 10,000
  \item Number of dense intervals per pattern: 1
  \item Occurrences per pattern within the interval: 100
\end{itemize}

The total number of patterns of length $\geq 2$ to be extracted was controlled to 1,300.

\section{Data Preprocessing for Experiment D}
\label{sec:app_preprocess}

Each basket in the Dunnhumby Complete Journey dataset is annotated with a day-level timestamp (DAY); however, Apriori-window assumes consecutive integer timestamps.
We therefore sorted baskets in ascending order of (DAY, BASKET\_ID) and defined the timestamp of each basket as
\[
  \text{timestamp} = \text{DAY} \times 1000 + \left\lfloor \frac{\text{rank\_within\_day} \times 1000}{n_{\text{day}}} \right\rfloor
\]
(where $n_{\text{day}}$ is the total number of baskets on that day, at most $668 < 1000$).
This uniform placement distributes baskets evenly across slots 0--999 within each day, so that $W = d \times 1000$ corresponds precisely to a $d$-day window.
The original DAY value can be recovered directly as $\lfloor\text{timestamp}/1000\rfloor$.

\section{Best F1 Parameters for Each Method}
\label{sec:best_params}

We list below the F1-maximizing parameters for each method and setting reported in Table~\ref{tab:combined}.

\paragraph{\textbf{FIM}}
For all datasets, $\mathit{minSup}$ is set to the experimental minimum support $\sigma_{\text{exp}}$.

\paragraph{\textbf{LPFIM}}
\begin{itemize}
  \item Retail:
  \begin{itemize}
    \item $W=250, \sigma=25$: $\sigma_\ell=15\%$, $\mathit{minthd1}=10$, $\mathit{minthd2}=20$
    \item $W=500, \sigma=50$: $\sigma_\ell=15\%$, $\mathit{minthd1}=10$, $\mathit{minthd2}=30$
    \item $W=750, \sigma=75$: $\sigma_\ell=20\%$, $\mathit{minthd1}=10$, $\mathit{minthd2}=20$
    \item $W=1000, \sigma=100$: $\sigma_\ell=20\%$, $\mathit{minthd1}=10$, $\mathit{minthd2}=30$
  \end{itemize}
  \item Chicago:
  \begin{itemize}
    \item $W=250, \sigma=25$: $\sigma_\ell=10\%$, $\mathit{minthd1}=20$, $\mathit{minthd2}=90$
    \item $W=500, \sigma=50$: $\sigma_\ell=10\%$, $\mathit{minthd1}=35$, $\mathit{minthd2}=200$
    \item $W=750, \sigma=75$: $\sigma_\ell=10\%$, $\mathit{minthd1}=45$, $\mathit{minthd2}=290$
    \item $W=1000, \sigma=100$: $\sigma_\ell=10\%$, $\mathit{minthd1}=25$, $\mathit{minthd2}=300$
  \end{itemize}
  \item OnlineRetail:
  \begin{itemize}
    \item $W=100, \sigma=10$: $\sigma_\ell=15\%$, $\mathit{minthd1}=10$, $\mathit{minthd2}=10$
    \item $W=250, \sigma=25$: $\sigma_\ell=15\%$, $\mathit{minthd1}=40$, $\mathit{minthd2}=10$
    \item $W=500, \sigma=50$: $\sigma_\ell=10\%$, $\mathit{minthd1}=55$, $\mathit{minthd2}=100$
  \end{itemize}
\end{itemize}

\paragraph{\textbf{LPPM}}
\begin{itemize}
  \item Retail:
  \begin{itemize}
    \item $W=250, \sigma=25$: $\mathit{maxPer}=10$, $\mathit{minDur}=165$, $\mathit{maxSoPer}=75$
    \item $W=500, \sigma=50$: $\mathit{maxPer}=10$, $\mathit{minDur}=425$, $\mathit{maxSoPer}=105$
    \item $W=750, \sigma=75$: $\mathit{maxPer}=10$, $\mathit{minDur}=635$, $\mathit{maxSoPer}=160$
    \item $W=1000, \sigma=100$: $\mathit{maxPer}=10$, $\mathit{minDur}=685$, $\mathit{maxSoPer}=145$
  \end{itemize}
  \item Chicago:
  \begin{itemize}
    \item $W=250, \sigma=25$: $\mathit{maxPer}=10$, $\mathit{minDur}=145$, $\mathit{maxSoPer}=125$
    \item $W=500, \sigma=50$: $\mathit{maxPer}=10$, $\mathit{minDur}=350$, $\mathit{maxSoPer}=185$
    \item $W=750, \sigma=75$: $\mathit{maxPer}=15$, $\mathit{minDur}=335$, $\mathit{maxSoPer}=10$
    \item $W=1000, \sigma=100$: $\mathit{maxPer}=15$, $\mathit{minDur}=460$, $\mathit{maxSoPer}=15$
  \end{itemize}
  \item OnlineRetail:
  \begin{itemize}
    \item $W=100, \sigma=10$: $\mathit{maxPer}=5$, $\mathit{minDur}=25$, $\mathit{maxSoPer}=90$
    \item $W=250, \sigma=25$: $\mathit{maxPer}=5$, $\mathit{minDur}=50$, $\mathit{maxSoPer}=55$
    \item $W=500, \sigma=50$: $\mathit{maxPer}=10$, $\mathit{minDur}=205$, $\mathit{maxSoPer}=120$
  \end{itemize}
\end{itemize}

\paragraph{\textbf{PFPM}}
\begin{itemize}
  \item Retail:
  \begin{itemize}
    \item $W=250, \sigma=25$: $\mathit{maxPer}=885$, $\mathit{minSup}=25$
    \item $W=500, \sigma=50$: $\mathit{maxPer}=255$, $\mathit{minSup}=50$
    \item $W=750, \sigma=75$: $\mathit{maxPer}=255$, $\mathit{minSup}=75$
    \item $W=1000, \sigma=100$: $\mathit{maxPer}=255$, $\mathit{minSup}=100$
  \end{itemize}
  \item Chicago:
  \begin{itemize}
    \item $W=250, \sigma=25$: $\mathit{maxPer}=6050$, $\mathit{minSup}=25$
    \item $W=500, \sigma=50$: $\mathit{maxPer}=3655$, $\mathit{minSup}=50$
    \item $W=750, \sigma=75$: $\mathit{maxPer}=3655$, $\mathit{minSup}=75$
    \item $W=1000, \sigma=100$: $\mathit{maxPer}=3560$, $\mathit{minSup}=100$
  \end{itemize}
  \item OnlineRetail:
  \begin{itemize}
    \item $W=100, \sigma=10$: $\mathit{maxPer}=3400$, $\mathit{minSup}=10$
    \item $W=250, \sigma=25$: $\mathit{maxPer}=1165$, $\mathit{minSup}=25$
    \item $W=500, \sigma=50$: $\mathit{maxPer}=785$, $\mathit{minSup}=50$
  \end{itemize}
\end{itemize}

\paragraph{\textbf{PPFPM}}
\begin{itemize}
  \item Retail:
  \begin{itemize}
    \item $W=250, \sigma=25$: $\mathit{maxPer}=50$, $\mathit{minPR}=0.90$, $\mathit{minSup}=25$
    \item $W=500, \sigma=50$: $\mathit{maxPer}=50$, $\mathit{minPR}=0.95$, $\mathit{minSup}=50$
    \item $W=750, \sigma=75$: $\mathit{maxPer}=50$, $\mathit{minPR}=0.95$, $\mathit{minSup}=75$
    \item $W=1000, \sigma=100$: $\mathit{maxPer}=50$, $\mathit{minPR}=0.95$, $\mathit{minSup}=100$
  \end{itemize}
  \item Chicago:
  \begin{itemize}
    \item $W=250, \sigma=25$: $\mathit{maxPer}=15$, $\mathit{minPR}=0.30$, $\mathit{minSup}=25$
    \item $W=500, \sigma=50$: $\mathit{maxPer}=185$, $\mathit{minPR}=0.95$, $\mathit{minSup}=50$
    \item $W=750, \sigma=75$: $\mathit{maxPer}=45$, $\mathit{minPR}=0.75$, $\mathit{minSup}=75$
    \item $W=1000, \sigma=100$: $\mathit{maxPer}=45$, $\mathit{minPR}=0.75$, $\mathit{minSup}=100$
  \end{itemize}
  \item OnlineRetail:
  \begin{itemize}
    \item $W=100, \sigma=10$: $\mathit{maxPer}=105$, $\mathit{minPR}=0.55$, $\mathit{minSup}=10$
    \item $W=250, \sigma=25$: $\mathit{maxPer}=70$, $\mathit{minPR}=0.65$, $\mathit{minSup}=25$
    \item $W=500, \sigma=50$: $\mathit{maxPer}=80$, $\mathit{minPR}=0.80$, $\mathit{minSup}=50$
  \end{itemize}
\end{itemize}

\paragraph{\textbf{RPM}}
\begin{itemize}
  \item Retail:
  \begin{itemize}
    \item $W=250, \sigma=25$: $\mathit{maxPer}=30$, $\mathit{minPS}=25$, $\mathit{minRec}=1$
    \item $W=500, \sigma=50$: $\mathit{maxPer}=40$, $\mathit{minPS}=50$, $\mathit{minRec}=1$
    \item $W=750, \sigma=75$: $\mathit{maxPer}=45$, $\mathit{minPS}=75$, $\mathit{minRec}=1$
    \item $W=1000, \sigma=100$: $\mathit{maxPer}=45$, $\mathit{minPS}=100$, $\mathit{minRec}=1$
  \end{itemize}
  \item Chicago:
  \begin{itemize}
    \item $W=250, \sigma=25$: $\mathit{maxPer}=30$, $\mathit{minPS}=25$, $\mathit{minRec}=1$
    \item $W=500, \sigma=50$: $\mathit{maxPer}=35$, $\mathit{minPS}=50$, $\mathit{minRec}=1$
    \item $W=750, \sigma=75$: $\mathit{maxPer}=35$, $\mathit{minPS}=75$, $\mathit{minRec}=1$
    \item $W=1000, \sigma=100$: $\mathit{maxPer}=45$, $\mathit{minPS}=100$, $\mathit{minRec}=1$
  \end{itemize}
  \item OnlineRetail:
  \begin{itemize}
    \item $W=100, \sigma=10$: $\mathit{maxPer}=40$, $\mathit{minPS}=10$, $\mathit{minRec}=1$
    \item $W=250, \sigma=25$: $\mathit{maxPer}=75$, $\mathit{minPS}=25$, $\mathit{minRec}=1$
    \item $W=500, \sigma=50$: $\mathit{maxPer}=80$, $\mathit{minPS}=50$, $\mathit{minRec}=1$
  \end{itemize}
\end{itemize}

\end{document}